\newtheorem{theorem}{Theorem}
\begin{document}

\title{A Note on Fault Tolerant Reachability for Directed Graphs}

\date{\today}

\author{Loukas Georgiadis$^{1}$ \and Robert E. Tarjan$^{2}$}

\maketitle 

\footnotetext[1]{Department of Computer Science \& Engineering, University of Ioannina, Greece. E-mail: \texttt{loukas@cs.uoi.gr}.}
\footnotetext[2]{Department of Computer Science, Princeton
University, 35 Olden Street, Princeton, NJ, 08540, and Intertrust Technologies. E-mail: \texttt{ret@cs.princeton.edu}.}

In this note we describe an application of low-high orders~\cite{DomCert:TALG} in fault-tolerant network design.
Baswana et al.~\cite{FaultTolerantReachability} study the following reachability problem.
We are given a flow graph $G = (V, A)$ with start vertex $s$, and a spanning tree $T =(V, A_T)$ rooted at $s$.
We call a set of arcs $A'$ \emph{valid} if the subgraph $G' = (V, A_T \cup  A')$ of $G$ has the same dominators as $G$.
The goal is to find a valid set of minimum size.
Baswana et al.~\cite{FaultTolerantReachability} show that there is a valid set containing at most $n - 1$ arcs, and they give an algorithm to compute a minimum-size valid set in $O(m \log{n})$ time, where $n = |V|$ and $m = |A|$.
Their algorithm is based on generalizing the notion of semi-dominators~\cite{domin:lt} from a depth-first spanning tree to an arbitrary spanning tree.

Here we give a simple algorithm, Algorithm 9 below, to compute a minimum-size valid set in $O(m)$ time, given the dominator tree $D$ and a low-high order of it.
Since $D$ and a low-high order can be computed in $O(m)$ time~\cite{DomCert:TALG}, so can a minimum-size valid set.

\vspace{0.25cm}
\begin{figure}[h]
\begin{center}
\fbox{
\begin{minipage}[h]{\textwidth}
\begin{center}
\textbf{Algorithm 9: Construction of a minimum-size valid set $A'$}
\end{center}
Initialize $A'$ to be empty.
For each vertex $v \not= s$, apply the appropriate one of the following cases, where $t(v)$ and $d(v)$ are the parent of $v$ in $T$ and $D$ respectively ($d(v)$ is the immediate dominator of $v$):
\begin{description}\setlength{\leftmargin}{10pt}
\item[Case 1:] $t(v) = d(v)$.  Do nothing.
\item[Case 2:] $t(v) \not= d(v)$ and $(d(v), v) \in A$. Add $(d(v),v)$ to $A'$.
\item[Case 3:] $(d(v), v) \not\in A$.
\begin{description}\setlength{\leftmargin}{10pt}
    \item[Subcase 3a:] $t(v) > v$. Add to $A'$ an arc $(x, v)$ with $x < v$.
    \item[Subcase 3b:] $t(v) < v$. Add to $A'$ an arc $(x, v)$ with $x > v$ and $x$ not a descendant of $v$ in $D$.
\end{description}
\end{description}
\end{minipage}
}
\end{center}
\end{figure}

\begin{theorem}
\label{theorem:fault-tolerant-reachability}
The set $A'$ computed by Algorithm 9 is a minimum-size valid set.
\end{theorem}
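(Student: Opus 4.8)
The plan is to prove two things separately: that $A'$ is \emph{valid}, and that no valid set can be smaller. First I would record the size of $A'$. Note that $t(v)=d(v)$ forces $(d(v),v)=(t(v),v)\in A_T\subseteq A$, so Case~3 occurs only when $t(v)\neq d(v)$, and the three cases partition the vertices $v\neq s$ according to whether $t(v)=d(v)$ (Case~1, no arc added) or $t(v)\neq d(v)$ (Cases~2 and~3, exactly one arc entering $v$ added). Hence $|A'|=|\{v\neq s: t(v)\neq d(v)\}|$, the added arcs entering pairwise distinct vertices.

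For minimality I would prove the matching lower bound: every valid set $A''$ must contain an arc entering each $v$ with $t(v)\neq d(v)$. Indeed, $v$ has exactly one entering arc in $A_T$, namely $(t(v),v)$; if $A''$ contained no arc entering $v$, then $v$ would have the unique in-neighbor $t(v)$ in $G''=(V,A_T\cup A'')$, making $t(v)$ the immediate dominator of $v$ in $G''$. Since $t(v)\neq d(v)$, this contradicts validity. As arcs entering distinct vertices are distinct, $|A''|\ge |\{v\neq s: t(v)\neq d(v)\}|=|A'|$, so once validity is established, $A'$ is of minimum size.

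For validity I would use two ingredients. First, since $G'=(V,A_T\cup A')$ is a subgraph of $G$ on the same vertex set with the same source and with every vertex reachable (because $T$ spans $V$), every path in $G'$ is a path in $G$; hence every dominator in $G$ survives in $G'$, and moreover $G'$ inherits from $G$ the parent property with respect to $D$ (every arc of $G'$ is an arc of $G$, so its source is a weak descendant in $D$ of the parent of its head). Second, I would verify that the given low-high order $\delta$ of $D$ is again a low-high order of $G'$ relative to $D$, and then invoke the certification theorem of~\cite{DomCert:TALG}: a tree rooted at $s$ spanning the reachable vertices that satisfies the parent property and admits a low-high order is the dominator tree, whence the dominator tree of $G'$ equals $D$ and $A'$ is valid. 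To check the low-high condition in $G'$: in Cases~1 and~2 the arc $(d(v),v)$ lies in $A_T\cup A'$, which suffices; in Case~3 the tree arc $(t(v),v)$ and the added arc $(x,v)$ together form the required low/high pair.

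The main obstacle, and the step needing the most care, is the low-high verification in Case~3, which hinges on a structural fact I would establish first: $t(v)$ is never a proper descendant of $v$ in $D$. Otherwise $v$ would dominate $t(v)$ in $G$, forcing the simple tree path from $s$ to $t(v)$ through $v$, which is impossible since $t(v)$ is the parent of $v$ in $T$. With this fact, in Subcase~3a the high arc is the tree arc $(t(v),v)$ (its source satisfies $t(v)>v$ and is not a descendant of $v$) while the low arc is the added $(x,v)$ with $x<v$; in Subcase~3b the roles reverse, $t(v)<v$ supplying the low arc and the added $(x,v)$ with $x>v$ not a descendant supplying the high arc. Finally, the existence of the arcs $(x,v)$ demanded by the two subcases is exactly what the low-high property of $\delta$ in $G$ guarantees, since $(d(v),v)\notin A$ there forces both a low entering arc and a high entering arc into $v$ whose source is not a descendant of $v$ in $D$.
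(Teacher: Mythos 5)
Your proposal is correct, and the minimality half (count the vertices with $t(v)\neq d(v)$, show every valid set needs an arc entering each such vertex) is essentially the paper's argument. The validity half, however, takes a genuinely different route. The paper shows $A'$ is valid by re-running a divergent-spanning-tree construction (its Algorithm $1'$) restricted to the arcs of $G'$, checking that each step can still succeed, and concluding that $G'$ contains two spanning trees that are divergent in $G$, which forces every dominator in $G'$ to be a dominator in $G$. You instead certify $D$ directly as the dominator tree of $G'$: the parent property is inherited because $G'$ is a subgraph of $G$, and you verify case by case that the given order remains a low-high order of $G'$ (tree arc plus added arc supplying the low/high pair in Case 3), then invoke the certification theorem of \cite{DomCert:TALG}. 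The two arguments are cousins --- the certification theorem in \cite{DomCert:TALG} is itself proved via the divergent-tree construction --- but yours packages that machinery as a black box, which makes the verification shorter and arguably cleaner, while the paper's version is more self-contained in exhibiting the divergent trees inside $G'$ explicitly. Notably, both proofs hinge on the same structural fact, established the same way: $t(v)$ cannot be a proper descendant of $v$ in $D$ because the tree path in $T$ from $s$ to $t(v)$ avoids $v$. Your observation that the arcs added in Cases 2 and 3 never coincide with tree arcs (so $|A'|$ really equals the number of vertices with $t(v)\neq d(v)$) is a small point the paper leaves implicit, and it is worth stating as you do.
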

\begin{proof}
First, we show that in each occurrence of Cases 2 and 3, an arc will be added to $A'$.  This is obvious for Case 2.  To show that this is true for Case 3, we use the definition of a low-high order
\cite[Section 2]{DomCert:TALG}: for any vertex $v \not=  s$ such that $(d(v), v) \not\in A$, there are two arcs $(u, v)$ and $(w, v)$ such that $u < v < w$ and $w$ is not a descendant of $v$ in $D$.  Only one of these arcs can be in $T$. It follows that in both Case 3a and Case 3b, there is an arc that satisfies the stated constraint, and hence such an arc will be added to $A'$.

Second, we show that $A'$ is valid. To do this, we use the notion of a pair of divergent spanning trees from \cite{DomCert:TALG}.  Two spanning trees $B$ and $R$, both rooted at $s$, are \emph{divergent} if, for every vertex $v$, the paths from $s$ to $v$ in $B$ and $R$ have only the dominators of $v$ in common.  Algorithm $1'$ below, which  simplifies Algorithm 1 in \cite{DomCert:TALG} as discussed just before Theorem 2.10 in \cite{DomCert:TALG}, constructs two divergent spanning trees, provided that each step is successful: the proof of Theorem 2.8 in \cite{DomCert:TALG} applies to Algorithm $1'$ equally as well as to the original Algorithm 1.
Indeed, as does Algorithm 1, Algorithm $1'$ constructs a pair of \emph{strongly} divergent spanning trees, but here we need only the weaker property of divergence.

\vspace{0.25cm}
\begin{figure}[h]
\begin{center}
\fbox{
\begin{minipage}[h]{\textwidth}
\begin{center}
\textbf{Algorithm $\mathbf{1'}$: Construction of Two Divergent Spanning Trees $B$ and $R$}
\end{center}
Let $D$ be the dominator tree of flow graph $G = (V, A)$, with a preorder that is a low-high order of $G$.
For each vertex $v \not= s$, apply the appropriate one of the following cases to choose arcs $(b(v), v)$ in $B$ and $(r(v), v)$ in $R$:
\begin{description}\setlength{\leftmargin}{10pt}
\item[Case 1:] $(d(v), v) \in A$. Set $b(v) = r(v) = t(v)$.
\item[Case 2:] $(d(v), v) \not\in A$. Choose two arcs $(u, v)$ and $(w, v)$ such that $u < v < w$ in low-high order and $w$ is not a descendant of $v$ in $D$.
Set $b(v) = u$ and $r(v) = w$.
\end{description}
\end{minipage}
}
\end{center}
\end{figure}

Suppose we apply Algorithm $1'$ to $G$, but only allow it to choose arcs that are in subgraph $G'$.
Suppose Case 1 of Algorithm $1'$ applies to $v$.  Then $(d(v), v)$ is in $G$. But the construction of $A'$ guarantees that $(d(v), v)$ is in $G'$ as well.
Hence this case will add an arc in $G'$, namely $(d(v), v)$, to both $B$ and $R$. Suppose Case 2 of Algorithm $1'$ applies to $v$.  Then $(d(v), v)$ is not in $G$. Algorithm 9 will apply Case 3 to $v$, which guarantees that $G'$ contains arcs $(u, v)$ and $(w, v)$ such that $u < v < w$ (in low-high order) and $w$ is not a descendant of $v$ in $D$: if $t(v) > v$, $t(v)$ cannot be a descendant of $v$ in $D$ since there is a $v$-avoiding path in $T$ from $s$ to $t(v)$. Hence this case can successfully choose arcs in $G'$ to add to $B$ and $R$. We conclude that $G'$ contains two spanning trees that are divergent in $G$. It follows that if $v$ dominates $w$ in $G'$, $v$ dominates $w$ in $G$.  The converse is immediate, since $G'$ is a subgraph of $G$. Thus $A'$ is valid.

Finally, we show that $A'$ is minimum-size. Set $A'$ contains an arc $(u, v)$ only if $t(v) \not=d(v)$. Any valid set must contain an arc entering $v$, since otherwise $t(v)$ dominates $v$ in $G'$, contradicting validity.
\end{proof}

\end{document}